\theoremstyle{plain}
\newtheorem{theorem}{Theorem}
\begin{document}
\renewcommand{\baselinestretch}{1.15}

\markboth{M.~Baake and H.~K\"{o}sters}{Random point 
sets and their diffraction}

\title{Random point sets and their diffraction}

\author{Michael Baake and Holger K\"{o}sters}
\address{Fakult\"{a}t f\"{u}r Mathematik, 
      Universit\"{a}t Bielefeld, Postfach 100131, 
     33501 Bielefeld, Germany}

\begin{abstract}
  The diffraction of various random subsets of the integer lattice
  $\mathbb{Z}^{d}$, such as the coin tossing and related systems, are
  well understood. Here, we go one important step beyond and consider
  random point sets in $\mathbb{R}^{d}$. We present several systems
  with an effective stochastic interaction that still allow for
  explicit calculations of the autocorrelation and the diffraction
  measure. We concentrate on one-dimensional examples for illustrative
  purposes, and briefly indicate possible generalisations to higher
  dimensions.

  In particular, we discuss the stationary Poisson process in
  $\mathbb{R}^{d}$ and the renewal process on the line. The latter
  permits a unified approach to a rather large class of
  one-dimensional structures, including random tilings.  Moreover, we
  present some stationary point processes that are derived from the
  classical random matrix ensembles as introduced in the pioneering
  work of Dyson and Ginibre. Their re-consideration from the
  diffraction point of view improves the intuition on systems with
  randomness and mixed spectra.
\end{abstract} 

\maketitle
\thispagestyle{empty}


\section{Introduction}

Mathematical diffraction theory is an abstraction of kinematic
diffraction \cite{C} that is both mathematically rich and practically
relevant. Its insight helps to explore the true setting and difficulty
of the inverse problem of structure determination from (kinematic)
diffraction data; compare \cite{BG07,BG09,GB} and references therein.
Moreover, mathematical diffraction theory has several important
connections with the theory of dynamical systems \cite{BL,BL2,BLM,BW}.

While this is well studied in the classic case of crystals and, more
generally, pure point diffractive systems, much less is known about
structures with continuous (or mixed) diffraction spectra. The
Thue-Morse chain is a well-known example with singular continuous
diffraction, but mixed dynamical spectrum; see \cite{BG08,HB,ME} for
more.  Recently, some progress has also been made for point sets of
stochastic origin; see \cite{BBM} and references therein for a survey.

Nevertheless, the collection of fully worked out and understood
examples is relatively meagre in comparison with the situation of pure
point diffractive systems. Here, we summarise two important and
versatile examples from \cite{BBM} and augment them with two examples
from the theory of random matrices. The latter essentially derive from
old papers by Dyson \cite{D} and Ginibre \cite{G}, which were later
re-analysed by Mehta \cite{M}, though the results seem unknown in
diffraction theory.

Let us briefly recall the setting of mathematical diffraction theory;
see \cite{Hof,BBM} and references therein for more. The underlying
structure is modelled by an essentially translation bounded measure
$\omega$, which may be signed or even complex; see \cite{RS} for
background on measure theory. The corresponding \emph{autocorrelation}
measure $\gamma$, or autocorrelation for short, is defined as a
volume-weighted limit,
\begin{equation} \label{eq:def-auto}
   \gamma \, = \lim_{R\to\infty} 
   \frac{\omega^{}_{R} * \widetilde{\omega^{}_{R}}}
   {\mathrm{vol} (B_{R})} \, ,
\end{equation}
where $\omega^{}_{R}$ denotes the restriction of $\omega$ to the (open)
ball of radius $R$ around the origin. Moreover, $\widetilde{\omega}$
is the `flipped-over' measure defined by $\widetilde{\omega} (g)
= \overline{\omega (\widetilde{g})}$ for arbitrary continuous functions
of compact support, with $\widetilde{g} (x) = \overline{g(-x)}$. We
implicitly use the Riesz-Markov representation theorem that allows us
to identify regular Borel measures with linear functionals on the
space of continuous functions with compact support. In general, the
limit in \eqref{eq:def-auto} need not exist, but we will only consider
situations where it does, at least almost surely in the probabilistic
sense.

By construction, $\gamma$ is a positive definite measure, hence it is
always Fourier transformable. Here, we follow the convention of
\cite{BBM}, with the factor $2\pi$ in the exponent (via $e^{-2\pi i k
  x}$) rather than in front of the integral.  The result is the
\emph{diffraction} $\widehat{\gamma}$, which is a positive measure (by
the Bochner-Schwartz theorem \cite{RS}) that describes, loosely
speaking, how much intensity is scattered into any given volume of
space; see \cite{C,Hof} for more. The diffraction measure has a unique
decomposition
\begin{equation} \label{eq:decomp}
  \widehat{\gamma} = \bigl( \widehat{\gamma} \bigr)_{\mathsf{pp}}
  + \bigl( \widehat{\gamma} \bigr)_{\mathsf{sc}}
  + \bigl( \widehat{\gamma} \bigr)_{\mathsf{ac}}
\end{equation}
into its \emph{pure point} part (which is a countable sum of
Dirac measures, known as Bragg peaks), its \emph{absolutely
continuous} part (which comprises everything that can be expressed
by a locally integrable density relative to Lebesgue measure, known
as diffuse scattering) and its \emph{singular continuous} part
(which is everything that remains, and is often disregarded in
crystallography -- but see \cite{W}). 

The focus of this paper is on systems with structural disorder, which
means that we will see either purely absolutely continuous spectra or
mixtures thereof with pure point spectra. We begin with a review of
the Poisson process and the renewal process, where we follow
\cite{BBM} and adapt it to the concrete setting of crystallography.
Then, we consider some point processes that can be extracted from
random matrix theory and give rise to further examples that are
explicitly computable.

\section{Poisson process}

The homogeneous Poisson process in $\mathbb{R}^{d}$ is an ergodic
point process that is often considered as a model for an ideal gas. If
$\rho$ denotes its (point) density, the process is characterised
\cite{F1,F2,BBM} by the two requirements that the number of points in
a (measurable) set $A \subset \mathbb{R}^{d}$ is Poisson distributed
with parameter $\rho \lambda(A)$, where $\lambda$ is Lebesgue measure,
and that the number of points in sets $A_{1} , A_{2} , \ldots , A_{m}$
are independent random variables, for any collection of pairwise
disjoint, measurable subsets of $\mathbb{R}^{d}$. The Poisson process
is a model for an ideal gas of pointlike particles.

Let us consider such a process, with density $\rho$.  Due to
ergodicity, almost every realisation of it possesses a natural
autocorrelation, and the latter can be calculated via the Palm measure
of the process \cite[Thm.~3]{BBM}. The result reads
\begin{equation} \label{eq:P-auto}
    \gamma^{}_{P} = \rho \, \delta^{}_{0} + \rho^2  \lambda  ,
\end{equation}
where $\delta_{0}$ is the normalised Dirac measure at $0$. With
$\widehat{\delta_{0}}= \lambda$ and $\widehat{\lambda}=\delta_{0}$,
one obtains 
\begin{equation} \label{eq:P-diff}
    \widehat{\gamma^{}_{P}} =
    \rho^2  \delta^{}_{0} + \rho \, \lambda  ,
\end{equation}
which is the diffraction measure. We skip the proof, but mention that,
in one dimension, the result can easily be derived from the renewal
theorem discussed in Section~\ref{sec:ren}, with an exponential
waiting time distribution \cite{BBM}. Apart from the trivial point
measure at $0$, the diffraction measure $\widehat{\gamma^{}_{P}}$ is
absolutely continuous.

An interesting modification emerges from a \emph{marked} Poisson
process, where each point of a given realisation randomly gets the
weight $1$ or $-1$ with equal probability.  This leads to the
following modification of Eqs.~\eqref{eq:P-auto} and
\eqref{eq:P-diff}.

\begin{theorem}
  Consider a typical realisation of the homogeneous Poisson process of
  density $\rho$ in $\mathbb{R}^{d}$, which is a simple point set
  $\varLambda \subset \mathbb{R}^{d}$. Let $\omega =
  \sum_{x\in\varLambda} W_{\! x}\, \delta_{x}$ be a random Dirac comb
  where\/ $(W_{x})^{}_{x\in\varLambda}$ constitutes an i.i.d.\ family
  of random variables that take the values $1$ and $-1$ with equal
  probability. Then, the corresponding autocorrelation and diffraction
  measures almost surely read
\[
    \gamma^{}_{\omega} = \rho\, \delta_{0}
    \quad \text{and}\quad 
    \widehat{\gamma^{}_{\omega}} = \rho\,\lambda .
\]
\end{theorem}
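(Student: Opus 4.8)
The plan is to evaluate the finite-volume approximant from \eqref{eq:def-auto} against an arbitrary test function $g$, which it suffices to take real, and to control the limit almost surely. Writing $\varLambda_R = \varLambda \cap B_R$ and using that the weights are real, so that $\widetilde{\omega_R} = \sum_{y\in\varLambda_R} W_y\,\delta_{-y}$ and $\delta_x * \delta_{-y} = \delta_{x-y}$, one gets
\[
  \bigl(\omega_R * \widetilde{\omega_R}\bigr)(g) = \sum_{x,y\in\varLambda_R} W_x W_y\, g(x-y) = \#(\varLambda_R)\, g(0) + \sum_{\substack{x,y\in\varLambda_R\\ x\neq y}} W_x W_y\, g(x-y),
\]
where the diagonal contribution uses $W_x^2 = 1$. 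First I would treat the diagonal term: dividing by $\mathrm{vol}(B_R)$ and invoking the strong law of large numbers for the Poisson process (equivalently, the ergodicity already used for \eqref{eq:P-auto}), one has $\#(\varLambda_R)/\mathrm{vol}(B_R) \to \rho$ almost surely, so this part converges to $\rho\, g(0) = (\rho\,\delta_0)(g)$. Everything then hinges on showing that the off-diagonal sum, divided by $\mathrm{vol}(B_R)$, vanishes almost surely; this is where the sign cancellation must be exploited and is the main obstacle.

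For the off-diagonal term I would condition on a typical realisation of $\varLambda$ and average over the marks $(W_x)$ only. The key algebraic fact is the orthogonality relation $\mathbb{E}[W_x W_y W_{x'} W_{y'}] = 1$ when $\{x,y\} = \{x',y'\}$ as sets (with $x\neq y$, $x'\neq y'$) and $0$ otherwise, which holds because each index must occur an even number of times for the expectation to survive. Consequently the conditional mean of
\[
  S_R \;:=\; \frac{1}{\mathrm{vol}(B_R)} \sum_{\substack{x,y\in\varLambda_R\\ x\neq y}} W_x W_y\, g(x-y)
\]
is zero, while its conditional variance collapses to a single diagonal sum, namely
\[
  \mathbb{E}\bigl[S_R^2 \,\big|\, \varLambda\bigr] = \frac{1}{\mathrm{vol}(B_R)^2} \sum_{\substack{x,y\in\varLambda_R\\ x\neq y}} \bigl( g(x-y)^2 + g(x-y)\,g(y-x) \bigr).
\]
Because $g$ has compact support, the inner sum $\sum_{x\neq y} |g(x-y)|^2$ is, by the second-moment (Palm) properties of the Poisson process that already produced the $\rho^2\lambda$ term in \eqref{eq:P-auto}, of order $\mathrm{vol}(B_R)$ almost surely; hence $\mathbb{E}[S_R^2\mid\varLambda] = O(1/\mathrm{vol}(B_R)) \to 0$.

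It remains to upgrade this $L^2$-statement to almost sure convergence. I would pass to a subsequence $R_n$ for which $\sum_n \mathbb{E}[S_{R_n}^2 \mid \varLambda] < \infty$ --- for instance $R_n$ with $\mathrm{vol}(B_{R_n}) \sim n^2$ --- so that Borel--Cantelli yields $S_{R_n} \to 0$ almost surely conditionally on $\varLambda$, and then, after integrating over $\varLambda$ by Fubini, almost surely on the joint probability space; a standard interpolation over the ranges $R \in [R_n, R_{n+1}]$, using $R_{n+1}/R_n \to 1$, fills the gaps. More cleanly, one may note that the i.i.d.-marked Poisson process is again ergodic under translations and invoke the autocorrelation-via-Palm-measure result \cite[Thm.~3]{BBM} directly, which identifies the almost sure limit with the mean computed above. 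Either way, $\gamma_\omega = \rho\,\delta_0$ almost surely. The diffraction then follows immediately by Fourier transform: since $\widehat{\delta_0} = \lambda$, one obtains $\widehat{\gamma_\omega} = \rho\,\lambda$, which is purely absolutely continuous and, compared with \eqref{eq:P-diff}, shows that the random signs have wiped out the central Bragg peak at $\rho^2\delta_0$.
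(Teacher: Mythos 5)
Your proof is correct, but it proceeds along a genuinely different route from the paper's. The paper's proof is a two-line citation: it recognises the signed Poisson comb as the random weight model of \cite[Ex.~7]{BBM} applied to an ergodic, simple point process, and then invokes \cite[Thm.~4 and Cor.~1]{BBM}, general results which express the almost sure autocorrelation of an i.i.d.-weighted ergodic point process through the first two moments of the weights; with $\mathbb{E}[W_x]=0$ and $\mathbb{E}[W_x^2]=1$ only the term $\rho\,\delta_0$ survives, and Fourier transform gives $\rho\,\lambda$. You instead prove the statement from scratch: the diagonal part of $\omega_R * \widetilde{\omega_R}$ gives $\rho\, g(0)$ by the ergodic theorem, while the off-diagonal part is killed by conditioning on $\varLambda$, the fourth-moment orthogonality of the signs, a conditional variance bound of order $1/\mathrm{vol}(B_R)$ (using that the number of off-diagonal pairs at bounded distance grows like $\mathrm{vol}(B_R)$ --- the same ergodic input that yields \eqref{eq:P-auto}), and Borel--Cantelli along a subsequence plus interpolation. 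What each approach buys: yours is self-contained and makes the mechanism of the result --- conditional centring and variance collapse of the signed pair sum --- completely explicit, whereas the paper's is far shorter and the cited machinery covers arbitrary weight distributions with finite second moment at no extra cost; indeed your closing remark (ergodicity of the marked process plus \cite[Thm.~3]{BBM}) is essentially the paper's argument, so your write-up subsumes it. Two points you should tighten if writing your version out in full: the interpolation between $R_n$ and $R_{n+1}$ needs the almost sure estimate that the pairs involving a point of the annulus $B_{R_{n+1}} \setminus B_{R_n}$ and lying within $\mathrm{diam}(\mathrm{supp}\, g)$ of each other number $o(\mathrm{vol}(B_{R_n}))$, which again follows from ergodicity but is not free; and since your argument fixes a single test function $g$, the almost sure identity of \emph{measures} requires running it over a countable dense family of compactly supported continuous functions. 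Both are standard repairs, not gaps in the idea.
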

\begin{proof}
  This is the situation of the random weight model of
  \cite[Ex.~7]{BBM}, applied to a stationary Poisson process, which is
  an ergodic and simple point process. The result now follows from
  \cite[Thm.~4 and Cor.~1]{BBM} by a small calculation.
\end{proof}

When the density is $\rho=1$, this is one of many examples with
diffraction measure $\lambda$, which include the coin tossing sequence
on the integer lattice and the Rudin-Shapiro sequence \cite{HB,BG09},
but also various dynamical systems of algebraic origin \cite{BW} such
as Ledrappier's shift on $\mathbb{Z}^{2}$.  This provides ample
evidence that the inverse problem of structure determination becomes
significantly more involved in the presence of diffuse scattering.

One limitation of the Poisson process for applications in physics is
the missing uniform discreteness. This can be overcome by an
additional hard-core condition, as in the classic Mat\'{e}rn process;
see \cite{SS} and references therein for a formulation that matches
our setting. Here, a realisation of a homogeneous Poisson process is
randomly marked and then thinned out on the basis of a pairwise
comparison up to a certain distance. This leads to a modified point
set that is uniformly discrete (meaning that the minimum distance
between any two points is a positive number). Despite this
modification, autocorrelation and diffraction can still be calculated
explicitly; see \cite{BBM,Kai} and references therein for more.

\section{Renewal process}\label{sec:ren}

The situation of random point sets is significantly simpler in one
dimension, because a large class of processes can be characterised
constructively as a renewal process. Here, one starts from a
probability measure $\mu$ on $\mathbb{R}_{+}$ (the positive real line)
and considers a machine that moves at constant speed along the real
line and drops points on the line with a waiting time that is
distributed according to $\mu$.  Whenever this happens, the internal
clock is reset and the process resumes. Let us (for simplicity) assume
that both the velocity of the machine and the expectation value of
$\mu$ are $1$, so that we end up with realisations that are, almost
surely, point sets in $\mathbb{R}$ of density $1$ (after we let
the starting point of the machine move to $-\infty$).

Clearly, the process just described defines a stationary process.  It
can thus be analysed by considering all realisations which contain the
point $0$.  Moreover, there is a clear (distributional) symmetry
around this point, so that we can determine the autocorrelation (in
the sense of \eqref{eq:def-auto}) of almost all realisations from
studying what happens to the right of $0$. Indeed, if we want to know
the frequency per unit length of the occurrence of two points at
distance $x$ (or the corresponding density), we need to sum the
contributions that $x$ is the first point after $0$, the second point,
the third, and so on. In other words, we almost surely obtain the
autocorrelation
\begin{equation} \label{eq:auto-1}
    \gamma \; = \; \delta_{0} + \nu + \widetilde{\nu}
\end{equation}
with $\nu = \mu + \mu * \mu + \mu * \mu
* \mu + \ldots$ and $\widetilde{\nu}$ as defined above, where
the proper convergence of the sum of iterated convolutions follows
from \cite[Lemma~4]{BBM}. Note that the point measure at $0$ simply
reflects that the almost sure density of the resulting point set is
$1$. Indeed, $\nu$ is a translation bounded positive measure, and
satisfies the renewal relations (see \cite[Ch.~XI.9]{F2} or 
\cite[Prop.~1]{BBM} for a proof)
\begin{equation}\label{eq:ren-rel}
   \nu \; = \; \mu + \mu * \nu \qquad\text{and}\qquad
   (1-\widehat{\mu}\, )\, \widehat{\nu} 
    \; = \; \widehat{\mu}\, ,
\end{equation}
where $\widehat{\mu}$ is a uniformly continuous and bounded function
on $\mathbb{R}$. Note that the second equation emerges from the first
by Fourier transform, but has been rearranged to indicate why the set
$\{k \mid \widehat{\mu} (k) = 1 \}$ will become important below.  In
this setting, the measure $\gamma$ of \eqref{eq:auto-1} is both
positive and positive definite.

Based on the structure of the support of the underlying
probability measure $\mu$, one can now formulate the
following result for the diffraction of the renewal process.
\begin{theorem}
  Let $\mu$ be a probability measure on $\mathbb{R}_{+}$ with mean
  $1$, and assume that a moment of $\,\mu$ of order $1+\varepsilon$
  exists for some $\,\varepsilon > 0$. Then, the point sets obtained
  from the stationary renewal process based on $\mu$ almost surely
  have a diffraction measure of the form
\[
    \widehat{\gamma} \; = \; \bigl( \widehat{\gamma} 
    \bigr)_{\mathsf{pp}}  + (1-h)\,\lambda ,
\]
  where $h$ is a locally integrable function on $\mathbb{R}$
  that is continuous except for at most countably many
  points. It is given by
\[
    h(k) \; = \; \frac{2\,\bigl(\lvert\widehat{\mu} 
    (k)\rvert^2 - \mathrm{Re} (\widehat{\mu}(k))\bigr)}
    {\lvert 1 - \widehat{\mu} (k)\rvert^2}\, .
\]
   Moreover, the pure point part is given by
\[
   \bigl( \widehat{\gamma} \bigr)_{\sf pp} \; =
   \begin{cases} \delta^{}_{0} , & \text{if\/ $\mathrm{supp} 
     (\mu)$ is not a subset of a lattice}, \\
       \delta^{}_{\mathbb{Z}/b} , & \text{if\/ $b\mathbb{Z}$ is the
       coarsest lattice that contains $\mathrm{supp} (\mu)$.}
     \end{cases}
\]
\end{theorem}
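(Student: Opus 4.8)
The plan is to obtain $\widehat{\gamma}$ by Fourier transforming the autocorrelation $\gamma = \delta_{0} + \nu + \widetilde{\nu}$ of \eqref{eq:auto-1}. Since $\widehat{\delta_{0}} = \lambda$ and the flip operation turns into complex conjugation under Fourier transform (so that $\widehat{\widetilde{\nu}} = \overline{\widehat{\nu}}\,$), one gets $\widehat{\gamma} = \lambda + 2\,\mathrm{Re}(\widehat{\nu})$. Away from the set $S := \{k \mid \widehat{\mu}(k) = 1\}$, the second renewal relation in \eqref{eq:ren-rel} can be divided through to give $\widehat{\nu} = \widehat{\mu}/(1-\widehat{\mu})$ as an ordinary function, and the elementary identity $2\,\mathrm{Re}\bigl(\widehat{\mu}/(1-\widehat{\mu})\bigr) = -h$ then produces the absolutely continuous contribution $(1-h)\,\lambda$. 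The continuity of $h$ off $S$ is immediate because $\widehat{\mu}$ is uniformly continuous and $S$ will turn out to be countable.

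Next I would pin down $S$. As $\mu$ is a probability measure, $\widehat{\mu}(k) = 1$ forces $e^{-2\pi i k x} = 1$ for $\mu$-almost every $x$, that is, $k\,x \in \mathbb{Z}$ for every $x \in \mathrm{supp}(\mu)$. If $\mathrm{supp}(\mu)$ lies in no lattice, this holds only for $k=0$, so $S = \{0\}$; if $b\mathbb{Z}$ is the coarsest lattice containing $\mathrm{supp}(\mu)$, the same condition gives $S = \tfrac{1}{b}\mathbb{Z}$. These are exactly the candidate positions of the Bragg peaks.

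The heart of the argument is the local analysis at each $k_{0} \in S$, which yields the pure point part. The key structural input is that $\nu = \sum_{n\ge 1} \mu^{*n}$ is supported on $[0,\infty)$, so that $\widehat{\nu}$ is the boundary value of a function holomorphic in the lower half-plane; this fixes the $-\,i0$ prescription at the poles of $\widehat{\mu}/(1-\widehat{\mu})$. Using the mean-one assumption one computes $\widehat{\mu}'(k_{0}) = -2\pi i$, so that $1 - \widehat{\mu}(k) = 2\pi i\,(k-k_{0}) + o(k-k_{0})$ and the pole is simple. The Sokhotski--Plemelj formula then extracts a Dirac contribution $\tfrac{1}{2}\,\delta_{k_{0}}$ from $\widehat{\nu}$, whereas the accompanying principal-value term is purely imaginary and cancels against its conjugate; taking $2\,\mathrm{Re}$ therefore leaves exactly $\delta_{k_{0}}$. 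Summing over $S$ gives $(\widehat{\gamma})_{\sf pp} = \delta_{0}$ in the non-lattice case and $\delta_{\mathbb{Z}/b}$ in the lattice case, with unit weights throughout (consistent with the central intensity $\rho^{2} = 1$).

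The main obstacle is twofold: justifying the local integrability of $h$ and simultaneously excluding any singular continuous part, so that $\widehat{\gamma} = (\widehat{\gamma})_{\sf pp} + (1-h)\,\lambda$ is the complete decomposition. For the integrability, the assumed moment of order $1+\varepsilon$ makes $\widehat{\mu}$ continuously differentiable with a H\"{o}lder-$\varepsilon$ derivative; near $k_{0}$ this forces $\mathrm{Re}\bigl(\widehat{\mu}(k) - 1\bigr) = O\bigl(\lvert k-k_{0}\rvert^{1+\varepsilon}\bigr)$ against a denominator $\lvert 1-\widehat{\mu}(k)\rvert^{2} \sim 4\pi^{2}(k-k_{0})^{2}$, so that $h(k) = 2 + O\bigl(\lvert k-k_{0}\rvert^{\varepsilon-1}\bigr)$, which is integrable precisely because $\varepsilon > 0$. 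Here the hypothesis is used in an essential way. Controlling the solution of the renewal relation globally and ruling out extra singular mass is cleanest by invoking the renewal theorem, which fixes the asymptotic density of $\nu$ and hence the boundary behaviour of $\widehat{\nu}$, along the lines developed in \cite{BBM}; this is the step I expect to require the most care.
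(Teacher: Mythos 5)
Your proposal is correct in substance, but it reaches the theorem by a genuinely different route than the paper. The paper's own proof is essentially a reduction to the literature: the functional form of $h$ is read off from the second renewal relation in \eqref{eq:ren-rel}, its local integrability is delegated to \cite[Thm.~1.5.4]{U}, and the identification of the pure point part together with the completeness of the decomposition (no singular continuous component) is delegated to \cite[Lemma~5, Thm.~1 and Remark~3]{BBM}, where the Bragg weights are tied to the almost sure density of the point set via the renewal theorem. You instead prove both delicate points in-house. For integrability, your argument --- the $(1+\varepsilon)$-moment makes $\widehat{\mu}$ a $C^{1}$ function with H\"{o}lder-$\varepsilon$ derivative, whence $\mathrm{Re}\bigl(1-\widehat{\mu}(k)\bigr)=O\bigl(\lvert k-k_{0}\rvert^{1+\varepsilon}\bigr)$ against $\lvert 1-\widehat{\mu}(k)\rvert^{2}\sim 4\pi^{2}(k-k_{0})^{2}$ --- is exactly what the citation to \cite{U} encapsulates, and it is the correct use of the hypothesis. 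For the Bragg weights, your residue/Sokhotski--Plemelj analysis is the genuinely new element: since $\nu$ is supported on a half-line and $\lvert\widehat{\mu}(z)\rvert<1$ for $\mathrm{Im}\,z<0$ (as $\mu$ lives on the open positive axis), the function $F(z)=\widehat{\mu}(z)/\bigl(1-\widehat{\mu}(z)\bigr)$ is holomorphic in the lower half-plane with boundary value $\widehat{\nu}$ in the sense of tempered distributions, and mean $1$ gives $\widehat{\mu}{}'(k_{0})=-2\pi i$ at every $k_{0}\in S$, hence a simple pole with residue $1/(2\pi i)$, hence the contribution $\tfrac{1}{2}\delta_{k_{0}}$, doubled to $\delta_{k_{0}}$ upon adding the conjugate; this matches the paper's answer and makes transparent why \emph{all} peaks carry unit intensity. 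One improvement: the step you flag as hardest --- excluding singular continuous mass by invoking the renewal theorem --- is actually dispensable in your framework. Your own estimates give $\bigl\lvert F(z)-\tfrac{1}{2\pi i(z-k_{0})}\bigr\rvert\le C\,\lvert z-k_{0}\rvert^{\varepsilon-1}$ uniformly on a closed lower half-disk around $k_{0}$, so dominated convergence identifies $\widehat{\nu}$ near $k_{0}$ as $\tfrac{1}{2}\delta_{k_{0}}$ plus a principal value plus an $L^{1}_{\mathrm{loc}}$ function, leaving no room for singular continuous mass; alternatively, the distributional form $(1-\widehat{\mu}\,)\,\widehat{\nu}=\widehat{\mu}\,\lambda$ of the renewal relation shows that the signed measure $\widehat{\gamma}-(1-h)\lambda$ is annihilated by $\lvert 1-\widehat{\mu}\rvert^{2}$, hence supported on the countable closed set $S$, hence pure point. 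In exchange, the paper's route (via \cite{BBM}) is more probabilistic and robust, while yours is self-contained and pinpoints exactly where the mean-one normalisation enters.
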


\begin{proof}
  The process has a well-defined autocorrelation $\gamma$ as outlined
  above and given in Eq.~\eqref{eq:auto-1}. Due to the ergodicity of
  the process, this means that almost every realisation of the process
  is a simple point set with this autocorrelation. Since $\gamma$
  is positive definite, it is Fourier transformable, with
  $\widehat{\gamma}$ being a positive measure on $\mathbb{R}$.

  The point measure at $0$ (which is always present) reflects the fact
  that the resulting point set almost surely has density $1$; see
  \cite[Thm.~1]{BBM} and its proof for a detailed argument.  The
  functional form of $h$ can be calculated from the second renewal
  relation in \eqref{eq:ren-rel} whenever one has $\widehat{\mu}
  (k)\ne 1$.  Its local integrability everywhere is a consequence of
  the assumed moment condition, by an application of
  \cite[Thm.~1.5.4]{U}.

  The distinction via the nature of $\mathrm{supp} (\mu)$ takes
  care of the set $\{ k\in\mathbb{R} \mid \widehat{\mu} (k)= 1 \}$,
  which is either $\{ 0 \}$ or countable.  The result now follows from
  \cite[Lemma~5 and Thm.~1]{BBM} together with Remark~3 of the same
  paper.
\end{proof}

The renewal process is a versatile method to produce point sets on the
line. These include random tilings with finitely many intervals (which
are Delone sets) as well as the homogeneous Poisson process on the
line (where $\mu$ is the exponential distribution with mean $1$); see
\cite[Sec.~3]{BBM} for explicit examples and applications.

\section{Random matrix ensembles and random point sets 
on the line}

The global eigenvalue distribution of random orthogonal, unitary or
symplectic matrix ensembles is known to asymptotically follow the
classic semi-circle law. More precisely, this law describes the
eigenvalue distribution of the underlying ensembles of symmetric,
Hermitian and symplectic matrices with Gaussian distributed entries.
The corresponding random matrix ensembles are called GOE, GUE and GSE,
with attached $\beta$-parameters $1$, $2$ and $4$, respectively. They
permit an interpretation as a Coulomb gas, where $\beta$ is the power
in the central potential; see \cite{AGZ,M} for general background and
\cite{D} for the results that are relevant to our point of view.

{}For matrices of dimension $N$, the semi-circle has radius
$\sqrt{2N/\pi}$ and area $N$. Note that, in comparison to \cite{M}, we
have rescaled the density by a factor $1/\sqrt{\pi}$ here, so that we
really have a semi-circle, and not a semi-ellipse. To study the local
eigenvalue distribution with our application in mind, we rescale the
central region (between $\pm 1$, say) by $\sqrt{2N/\pi} $. This leads,
in the limit as $N\to\infty$, to a new ensemble of point sets on the
line that can be interpreted as a stationary, ergodic point process of
intensity $1$; for $\beta=2$, see \cite[Ch.~4.2]{AGZ} or \cite{Sos}
and references therein for details.  Since the process is simple
(meaning that, almost surely, no point is occupied twice), almost all
realisations are point sets of density $1$.

It is possible to calculate the autocorrelation of these processes, on
the basis of Dyson's correlation functions \cite{D}.  Though the
latter originally apply to the circular ensembles, they have been
adapted to the other ensembles by Mehta \cite{M}.  For all three
ensembles mentioned above, this leads to an autocorrelation of the
form
\begin{equation}\label{eq:Dyson-auto}
    \gamma = \delta_{0} + 
    \bigl( 1 - f(\lvert x \rvert ) \bigr) \lambda 
\end{equation}
where $f$ is a locally integrable function that depends on $\beta$.
Defining $s (r) = \frac{\sin (\pi r)}{\pi r}$, one obtains (with $r\ge
0$)
\begin{equation}\label{eq:f}
   f(r) = \begin{cases}
    s(r)^{2} + s' (r) \int_{r}^{\infty} s(t) \,\mathrm{d} t , &
    \text{if $\beta = 1$}, \\
    s(r)^{2} ,  &  \text{if $\beta=2$}, \\
    s(2r)^2 - 2 s' (2r) \int_{0}^{r} s(2t) \,\mathrm{d} t ,  &
    \text{if $\beta = 4$} .
    \end{cases}
\end{equation}

The diffraction measure is the Fourier transform of $\gamma$, which
has also been calculated in \cite{D,M}. Observing
$\widehat{\delta_{0}} =\lambda$ and $\widehat{\lambda} = \delta_{0}$,
  the result is always of the form
\begin{equation}\label{eq:Dyson-Fourier}
   \widehat{\gamma} = \delta_{0} + 
   \bigl( 1 - b(k) \bigr) \lambda =
   \delta_{0} + h(k) \, \lambda ,
\end{equation}
where $b = \widehat{f}$. The Radon-Nikodym density $h$ for
$\beta=1$ reads
\begin{equation}\label{eq:Radon-1}
   h_{1} (k) = \begin{cases}
   \lvert k \rvert \bigl(2 - \log(2\lvert k \rvert + 1)\bigr), &
   \text{if $\lvert k \rvert \le 1$}, \\
   2 - \lvert k \rvert \, \log 
   \frac{2 \lvert k \rvert + 1}{2 \lvert k \rvert - 1} , &
   \text{if $\lvert k \rvert > 1$} ,
    \end{cases}
\end{equation}
where $k\in\mathbb{R}$. The result for $\beta=2$ is simpler
and reads
\begin{equation}\label{eq:Radon-2}
   h_{2} (k) = \begin{cases}
   \lvert k \rvert, &
   \text{if $\lvert k \rvert \le 1$}, \\
   1 , &
   \text{if $\lvert k \rvert > 1$} ,
    \end{cases}
\end{equation}
while $\beta = 4$ leads to
\begin{equation}\label{eq:Radon-4}
   h_{4} (k) = \begin{cases}
   \frac{1}{4}\lvert k \rvert \bigl( 2 - \log
   \bigl| 1 - \lvert k \rvert \bigr| \bigr) , &
   \text{if $\lvert k \rvert \le 2$}, \\
   1 , &
   \text{if $\lvert k \rvert > 2$} .
    \end{cases}
\end{equation}
Figure~\ref{fig:dyson} illustrates the three cases. To summarise:

\begin{figure}
\centerline{\includegraphics[width=\textwidth]{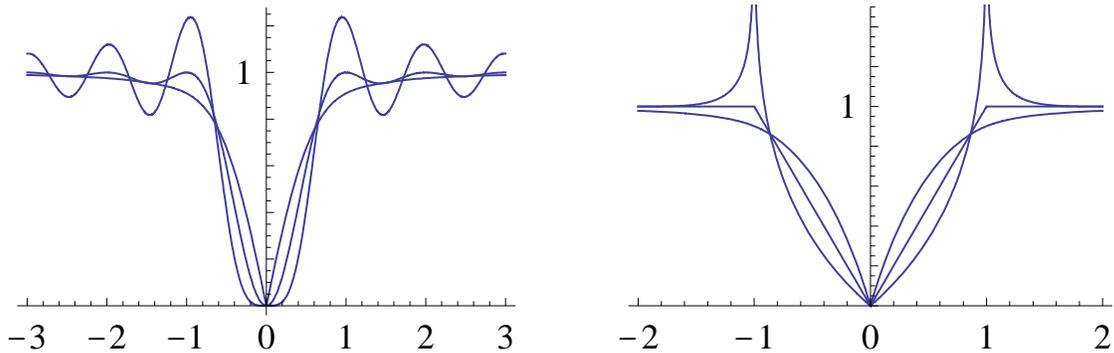}}
\caption{Absolutely continuous part of the autocorrelation (left) and
  the diffraction (right) for the three point set ensembles on the
  line, with $\beta\in\{1,2,4\}$. On the left, the oscillatory
  behaviour increases with $\beta$. On the right, $\beta=2$
  corresponds to the piecewise linear function with bends at $0$ and
  $\pm 1$, while $\beta=4$ shows a locally integrable singularity at
  $\pm 1$. The latter reflects the slowly decaying oscillations on
  the left.  \label{fig:dyson} }
\end{figure}

\begin{theorem}
  The eigenvalues of the Dyson random matrix ensembles for 
  $\beta \in \{1,2,4\}$, in the scaling of the local region around\/
  $0$ as used above, almost surely give rise to point sets of density
  $1$, with autocorrelation and diffraction measures as specified in
  Eqs.~\eqref{eq:Dyson-auto} and \eqref{eq:Dyson-Fourier}.  \hfill
  $\square$
\end{theorem}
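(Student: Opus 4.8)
The plan is to proceed in the spirit of the Poisson and renewal theorems above: first establish that the rescaled eigenvalue process converges to a stationary, ergodic and simple point process of intensity $1$, and then exploit ergodicity to identify the almost-sure autocorrelation of \eqref{eq:def-auto} with an ensemble average that is governed by the two-point correlation function. For $\beta = 2$, the limiting object is the determinantal sine process, whose stationarity, mixing (hence ergodicity) and simplicity are classical; see \cite{AGZ,Sos}. For $\beta \in \{1,4\}$, the analogous Pfaffian point processes arise as the bulk limits of the GOE and GSE, and share these structural properties. Granting this, almost every realisation is a simple point set of density $1$, which accounts for the prefactor of $\delta_0$ in \eqref{eq:Dyson-auto} and for the Fourier-transformability of the resulting positive definite $\gamma$.

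First I would expand the autocorrelation as $\gamma = \lim_{R\to\infty}\mathrm{vol}(B_R)^{-1}\sum_{x,y\in\varLambda\cap B_R}\delta_{x-y}$ and split the double sum into its diagonal ($x=y$) and off-diagonal ($x\ne y$) contributions. The diagonal yields $\delta_0$, while the off-diagonal part converges, by the ergodic theorem, to the measure whose Lebesgue density at the displacement $x$ is the product density (second correlation function) $\rho_2(x)$ of the process. For a translation-invariant process of intensity $1$, one has $\rho_2(x) = 1 - Y_2(\lvert x\rvert)$, where $Y_2$ is the two-level cluster function. The role of Dyson and Mehta is precisely that $Y_2$ is known in closed form for all three ensembles; inserting their expressions recovers $\rho_2 = 1 - f$ with $f$ exactly as in \eqref{eq:f}, and hence \eqref{eq:Dyson-auto}.

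It then remains to Fourier transform. With $\widehat{\delta_0} = \lambda$ and $\widehat{\lambda} = \delta_0$, one obtains $\widehat{\gamma} = \delta_0 + \bigl(1 - b(k)\bigr)\lambda$ with $b = \widehat{f}$, which is \eqref{eq:Dyson-Fourier}. For $\beta = 2$, the computation is transparent: the function $s$ has Fourier transform $\mathbf{1}_{[-1/2,1/2]}$, so $b = \widehat{s^2}$ is the triangle function $\bigl(1-\lvert k\rvert\bigr)_{+}$, giving the piecewise linear $h_2 = 1-b$ of \eqref{eq:Radon-2}. For $\beta \in \{1,4\}$, one evaluates the transforms of the corresponding lines of \eqref{eq:f}; these are the two-level form factors computed in \cite{D,M}, and they yield the logarithmic densities \eqref{eq:Radon-1} and \eqref{eq:Radon-4}. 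Local integrability of $h$ has to be checked along the way; for $\beta = 4$ this is the only delicate point, since the cluster function then decays merely like $1/r$ with oscillations, so that $f\notin L^1(\mathbb{R})$ and the transform must be read in the tempered-distribution sense, producing the integrable logarithmic singularity of $h_4$ at $\lvert k\rvert = 1$.

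The main obstacle I anticipate is not the computation but this first step: rigorously justifying that the diffraction autocorrelation, a volume-weighted spatial limit along a single typical realisation, coincides almost surely with the ensemble pair correlation. This is an ergodic-theoretic statement about the almost-sure convergence of the empirical two-point measure, and it is genuinely stronger than the convergence of expectations supplied by the correlation-function formalism. For $\beta = 2$, it follows from the mixing of the sine process; for $\beta\in\{1,4\}$, one needs the corresponding ergodicity of the Pfaffian limits together with a clustering or moment estimate to dominate the off-diagonal sum uniformly in $R$. Once this interchange of limits is secured, everything reduces to the closed-form correlation functions of Dyson and Mehta and to the Fourier evaluations indicated above.
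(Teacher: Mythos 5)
Your proposal follows essentially the same route as the paper, which states this theorem with no separate proof precisely because the preceding discussion \emph{is} the argument: the bulk scaling limit is a stationary, ergodic, simple point process of intensity $1$ (cited to \cite{AGZ,Sos} for $\beta=2$), ergodicity converts the almost-sure volume-averaged autocorrelation of \eqref{eq:def-auto} into the ensemble two-point function, and the cluster functions and their Fourier transforms are taken from Dyson \cite{D} and Mehta \cite{M}. Your additional observations --- the triangle-function computation for $\beta=2$, the non-$L^1$ decay of $f_4$ forcing a distributional reading of the transform and producing the logarithmic singularity at $\lvert k\rvert = 1$, and the need for ergodicity of the Pfaffian limits for $\beta\in\{1,4\}$ --- are all consistent with, and slightly more explicit than, what the paper records.
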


Note that $h_{4}$ is smooth at $k=\pm 2$, but has integrable
singularities at $k= \pm 1$.  The latter are a consequence of the
stronger oscillatory behaviour of the function $f_{4}$ at integer
values, as was already noticed in \cite{D}. When extrapolating to
other values of $\beta$ (in particular to $\beta > 4$), this is the
onset of another Bragg peak.

It is well-known that the circular random matrix ensembles (COE, CUE,
CSE) asymptotically give rise to the same local correlations
\cite{D,M}, and hence to the same autocorrelation and diffraction
(after appropriate rescaling).

\section{Random matrix ensembles and random point sets 
in the plane}

The above examples were derived from matrix ensembles with real
eigenvalues, and thus lead to point processes in $\mathbb{R}$.  There
is also one ensemble, due to Ginibre \cite{G} (see also \cite{M}), of
general complex matrices with Gaussian distributed entries that will
give rise to a stationary point process in $\mathbb{R}^{2}$. Again,
this emerges (by proper rescaling) from the eigenvalues (now seen as
elements of the plane), which approach uniform distribution in a
circle of radius $\sqrt{N/\pi}$ (and hence area $N$) as $N\to\infty$.

As before, the system can be interpreted as a Coulomb gas, with a
potential parameter $\beta = 2$. Other matrix ensembles permit this
interpretation, too, but do not seem to correspond to interesting
stationary processes, wherefore we stick to Ginibre's example here.

Following the original approach of \cite{G}, the limit $N\to\infty$
leads to a stationary and ergodic, simple point process of intensity
$1$, so that almost every realisation is a point set in the plane of
density $1$. Using complex variables $z_{i}\in\mathbb{C}\simeq
\mathbb{R}^{2}$, the $2$-point correlation function is of
determinantal form,
\begin{equation}\label{eq:Gin-corr}
   \rho (z^{}_{1}, z^{}_{2} ) = 
   e^{- \pi ( \lvert z^{}_{1}\rvert^{2} + \lvert z^{}_{2}\rvert^{2})}\,
   \left| \begin{matrix} e^{\pi \lvert z^{}_{1} \rvert^{2}} &
   e^{ \pi z^{}_{1}\, \overline{\! z^{}_{2} \! } \,} \\ 
   e^{\pi \, \overline{\! z^{}_{1}\! }\,  z^{}_{2}} &
   e^{\pi \lvert z^{}_{2} \rvert^{2}} \end{matrix}\right| =
   \bigl( 1 - e^{- \pi \lvert  z^{}_{1} - z^{}_{2}\rvert^{2}} \bigr),
\end{equation}
see \cite{G} or \cite{M} for a derivation. Note that, despite using
complex coordinates here, the expression is calculated relative to the
volume element of real coordinates (hence relative to Lebesgue
measure, as in \cite{M}).  The result is translation invariant and
only depends on the distance $r$ between the two points.

As a consequence, the autocorrelation of a realisation almost surely
reads
\begin{equation}\label{eq:Gin-auto}
   \gamma = \delta_{0} +  (1 - e^{- \pi r^{2}} )\, \lambda ,
\end{equation}
which is radially symmetric, with $r$ as above. By a standard
calculation, the Fourier transform of $\gamma$ results in
\begin{equation}\label{eq:Gin-Fourier}
   \widehat{\gamma} = \delta_{0} + 
   \bigl( 1 -  e^{- \pi \lvert k \rvert^{2}}\bigr) \lambda ,
\end{equation}
so that we obtain a self-dual pair of measures under Fourier transform
(as in the Poisson process of density $1$).  The radial dependence is
illustrated in Figure~\ref{fig:gin}.

\begin{figure} 
\centerline{\includegraphics[width=0.6\textwidth]{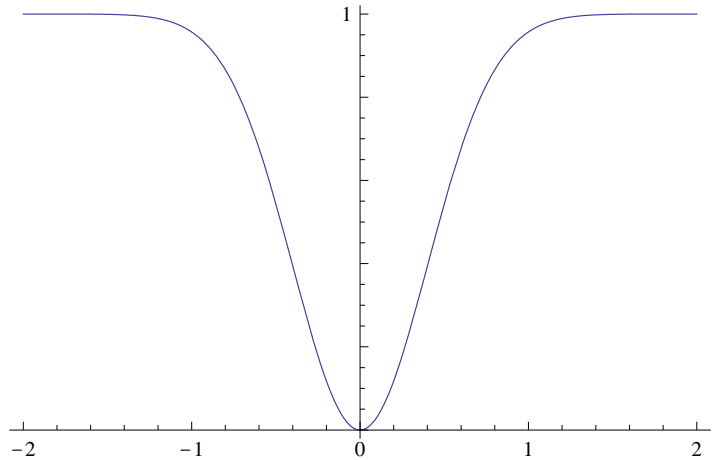}}
\caption{Radial dependence of the absolutely continuous part both of
  the autocorrelation and the diffraction measure for the planar point
  set ensemble, as derived from Ginibre's matrix ensemble.
   \label{fig:gin} }
\end{figure}
\begin{theorem}
  The Ginibre complex random matrix ensemble, in the scaling used
  above, almost surely results in point sets of density $1$, with
  autocorrelation \eqref{eq:Gin-auto} and diffraction
  \eqref{eq:Gin-Fourier}.  \hfill $\square$
\end{theorem}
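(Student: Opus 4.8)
The plan is to follow the same template that underlies the Poisson and Dyson cases: first establish the probabilistic properties of the limiting process, then read off the autocorrelation from the two-point correlation function, and finally Fourier transform. Concretely, I would invoke the results of Ginibre \cite{G} (see also \cite{M}), to the effect that, after the rescaling described above, the eigenvalue point process converges to a stationary, ergodic and simple point process on $\mathbb{R}^{2}$ of intensity $1$. The density-$1$ claim is then immediate: by the pointwise ergodic theorem for stationary point processes, almost every realisation $\varLambda$ has asymptotic density equal to the intensity, which is $1$.

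For the autocorrelation, I would write a typical realisation as the Dirac comb $\omega = \sum_{x\in\varLambda}\delta_{x}$ and use
\[
   \omega_{R} * \widetilde{\omega_{R}} \; = \;
   \sum_{x,y\in\varLambda\cap B_{R}} \delta_{x-y} \, .
\]
The diagonal contribution $x=y$ equals $\#(\varLambda\cap B_{R})\,\delta_{0}$, which after division by $\mathrm{vol}(B_{R})$ converges almost surely to $\delta_{0}$, again by ergodicity and intensity $1$. The off-diagonal part is the empirical measure of pair differences, and the key point is that, by ergodicity, it converges almost surely to $(1-e^{-\pi\lvert z\rvert^{2}})\,\lambda$, the absolutely continuous measure whose density is the translation-invariant two-point correlation function of \eqref{eq:Gin-corr}. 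This is the analogue of the Palm-measure computation of \cite[Thm.~3]{BBM} used for the Poisson process, now applied to the determinantal process at hand. Together with the diagonal term, this yields exactly \eqref{eq:Gin-auto}.

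The diffraction then follows by a direct Fourier transform. Using $\widehat{\delta_{0}}=\lambda$ and $\widehat{\lambda}=\delta_{0}$, together with the self-duality of the planar Gaussian under the present convention, $\widehat{e^{-\pi\lvert z\rvert^{2}}}(k) = e^{-\pi\lvert k\rvert^{2}}$ in $\mathbb{R}^{2}$, one computes that $(1-e^{-\pi r^{2}})\,\lambda$ transforms into $\delta_{0} - e^{-\pi\lvert k\rvert^{2}}\lambda$. Adding $\widehat{\delta_{0}}=\lambda$ gives $\widehat{\gamma} = \delta_{0} + (1-e^{-\pi\lvert k\rvert^{2}})\lambda$, which is \eqref{eq:Gin-Fourier} and exhibits the self-dual pair noted in the text.

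The main obstacle is the justification of the almost-sure (rather than merely in-mean) convergence of the empirical pair measure to $(1-e^{-\pi\lvert z\rvert^{2}})\,\lambda$, which is what upgrades the ensemble-level correlation function \eqref{eq:Gin-corr} to a statement about almost every single realisation. This rests on the ergodicity of the process together with the strong clustering of determinantal point processes, that is, the rapid decay of the truncated higher-order correlations, which controls the variance of the empirical pair counts; the super-exponential decay of $e^{-\pi r^{2}}$ furthermore ensures that no integrability issues arise at infinity. Once this convergence is in place, the remaining steps are the routine computations indicated above.
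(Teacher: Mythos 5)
Your proposal follows essentially the same route as the paper, which states the theorem with the proof implicit in the preceding discussion: invoke Ginibre's results for stationarity, ergodicity, simplicity and intensity $1$ of the limiting process, read off the autocorrelation \eqref{eq:Gin-auto} almost surely from the translation-invariant two-point correlation \eqref{eq:Gin-corr}, and obtain \eqref{eq:Gin-Fourier} by Fourier transform using the self-duality of the planar Gaussian. Your explicit attention to the almost-sure (rather than in-mean) convergence of the empirical pair measure, via ergodicity and the clustering properties of determinantal processes, is a point the paper passes over silently, and is a sound way to fill that gap.
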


\section{Summary and Outlook}

In this short communication, we have discussed several explicit
examples of stochastic point sets with explicitly computable
autocorrelation and diffraction measures. The viewpoint of point
process theory provides a universal platform to do so, though our
examples above also admit a direct approach. It would be interesting
to extend this to a family of processes, with $\beta$ as parameter in
the spirit of \cite{DE,VV,KS}, which then interpolates between the
Poisson process of density $1$ ($\beta = 0$) and the integer lattice
(which is approached as $\beta \to\infty$).

An interesting question concerns the connection with dynamical systems
theory, in particular the general relation between diffraction and
dynamical spectra.  Recent progress suggests that such a connection
might indeed exist, although it will certainly be more involved than
in the pure point diffractive case.

One fundamental shortcoming so far is the lack of understanding and
explicit examples for randomness with interaction. A first step in
that direction needs the inclusion of Gibbs measures for equilibrium
states, though it is not clear at the moment to what extent one can
derive explicit examples (such as the classic and well-known Ising
lattice gas).

\section*{Acknowledgements}
MB would like to thank P.\ Forrester and O.\ Zeitouni for
discussions on the Coulomb gas and the random matrix
ensembles. This work was supported by the German Research
Council (DFG), within the CRC 701.


\begin{thebibliography}{99}

\bibitem{AGZ}
G.W.~Anderson, A.~Guionnet and O.~Zeitouni,
\textit{An Introduction to Random Matrices},
Cambridge University Press, Cambridge (2010).

\bibitem{BBM}
M.~Baake, M.~Birkner and R.V.~Moody,
Diffraction of stochastic point sets:\
Explicitly computable examples,
Commun.\ Math.\ Phys.\ 293 (2010) 611--660;
\texttt{arXiv:0803.1266}.

\bibitem{BG07} 
M.~Baake and U.~Grimm,
Homometric model sets and window covariograms, 
Z.\ Krist.\ 222 (2007) pp.~54--58;
\texttt{arXiv:math.MG/0610411}.

\bibitem{BG08}
M.~Baake and U.~Grimm,
The singular continuous diffraction measure of the Thue-Morse chain, 
J.\ Phys.\ A:\ Math.\ Theor.\ 41 (2008) 422001 (6pp);
\texttt{arXiv:0809.0580}.

\bibitem{BG09}
M.~Baake and U.~Grimm, 
Kinematic diffraction is insufficient to distinguish order from disorder, 
Phys.\ Rev.\ B 79 (2009) 020203(R) and 80 (2009) 029903(E);
\texttt{arXiv:0810.5750}.

\bibitem{BL}
M.~Baake and D.~Lenz,
Dynamical systems on translation bounded measures:
Pure point dynamical and diffraction spectra, 
Ergodic Th.\ \& Dynam.\ Syst.\ 24 (2004) 1867--1893;
\texttt{arXiv:math.DS/0302231}.

\bibitem{BL2}
M.\ Baake and D.\ Lenz,
Deformation of Delone dynamical systems and topological 
conjugacy, J.\ Fourier Anal.\ Appl.\ 11 (2005) 125--150;
\texttt{arXiv:math.DS/0404155}.

\bibitem{BLM}
M.~Baake, D.~Lenz and R.V.~Moody, 
Characterization of model sets by dynamical systems, 
Ergodic Th.\ \& Dynam.\ Syst.\ 27 (2007) 341--382;
\texttt{arXiv:math.DS/0511648}.

\bibitem{BW}
M.~Baake and T.~Ward,
Planar dynamical systems with pure Lebesgue diffraction spectrum,
J.\ Stat.\ Phys.\ 140 (2010) 90--102;
\texttt{arXiv:1003.1536}.

\bibitem{C}
J.M.~Cowley, \textit{Diffraction Physics}, 3rd edition,
North-Holland, Amsterdam (1995).

\bibitem{DE}
I.~Dumitriu and A.~Edelman,
Matrix models for beta ensembles,
J.\ Math.\ Phys.\ 43 (2002) 5830--5847;
\texttt{arXiv:math-ph/0206043}.

\bibitem{D}
F.~Dyson,
Statistical theory of the energy levels of complex systems. III,
J.\ Math.\ Phys.\ 3 (1962) 166--175.

\bibitem{F1}
W.~Feller,
\textit{An Introduction to Probability Theory and Its
Applications}, vol.\ I, 3rd ed.\ (rev.\ printing),
Wiley, New York (1968).

\bibitem{F2}
W.~Feller,
\textit{An Introduction to Probability Theory and Its
Applications}, vol.\ II, 2nd ed., Wiley, New York (1971).

\bibitem{G}
J.~Ginibre,
Statistical ensembles of complex, quaternion, and real matrices,
J.\ Math.\ Phys.\ 6 (1965) 440--449.

\bibitem{GB}
U.~Grimm and M.~Baake, 
Homometric point sets and inverse problems, 
Z.\ Krist.\ 223 (2008) 777--781;
\texttt{arXiv:0808.0094}.

\bibitem{HB}
M.~H\"{o}ffe and M.~Baake,
Surprises in diffuse scattering,
Z.\ Krist.\ 215 (2000) 441--444; \newline
\texttt{arXiv:math-ph/0004022}.

\bibitem{Hof}
A.~Hof,
On diffraction by aperiodic structures,
Commun.\ Math.\ Phys.\ 169 (1995) 25--43.

\bibitem{KS}
R.~Killip and M.~Stoiciu,
Eigenvalue statistics for CMV matrices:\
from Poisson to clock via random matrix ensembles.
Duke Math.\ J.\ 146 (2009) 361--399;
\texttt{arXiv:math-ph/0608002}.

\bibitem{Kai}
K.~Matzutt,
\textit{Diffraction of Point Sets with Structural Disorder},
PhD thesis, Univ.\ Bielefeld (2010).

\bibitem{M}
M.L.~Mehta,
\textit{Random Matrices}, 3rd ed.,
Elsevier, Amsterdam (2004).

\bibitem{RS}
M.~Reed and B.~Simon,
\textit{Methods of Modern Mathematical Physics I:\ 
Functional Analysis},
2nd ed., Academic Press, San Diego (1980).

\bibitem{Sos}
A.~Soshnikov,
Determinantal random point fields,
Russian Math.\ Surveys 55 (2000) 923--975; \newline
\texttt{arXiv:math.PR/0002099}.

\bibitem{SS}
D.~Stoyan and H.~Stoyan,
On one of Mat\'{e}rn's hard-core point process models,
Math.\ Nachr.\ 122 (1985) 205--214.

\bibitem{U}
N.G.~Ushakov,
\textit{Selected Topics in Characteristic Functions},
Brill Academic Publishers, Utrecht (1999).

\bibitem{ME}
A.C.D.~van Enter and J.~Mi\c{e}kisz,
How should one define a (weak) crystal?
J.\ Stat.\ Phys.\ 66 (1992) 1147--1153.

\bibitem{VV}
B.~Valk\'{o} and B.~Vir\'{a}g (2009),
Continuum limits of random matrices and the Brownian carousel,
Invent.\ Math.\ 177 (2009) 463--508;
\texttt{arXiv:0712.2000}.

\bibitem{W}
R.L.~Withers,
Disorder, structured diffuse scattering and
the transmission electron microscope,
Z.\ Krist.\ 220 (2005) 1027--1034.



\end{thebibliography}
\end{document}